\newtheorem{Theo}{Theorem}
\def\BibTeX{{\rm B\kern-.05em{\sc i\kern-.025em b}\kern-.08em
    T\kern-.1667em\lower.7ex\hbox{E}\kern-.125emX}}
\begin{document}

\title{Cooperative Service Caching and Workload Scheduling in Mobile Edge Computing
}
\author{Xiao~Ma,~\IEEEmembership{Member,~IEEE,}
        Ao~Zhou,~\IEEEmembership{Member,~IEEE,}
        Shan~Zhang,~\IEEEmembership{Member,~IEEE,}
        Shangguang Wang,~\IEEEmembership{Senior~Member,~IEEE,}
\thanks{Xiao Ma, Ao Zhou and Shangguang Wang are with the State Key Laboratory of Networking and Switching Technology, Beijing University of Posts and Telecommunications, Beijing, China, 100876.
E-mail: maxiao18@bupt.edu.cn, aozhou@bupt.edu.cn, sgwang@bupt.edu.cn.}
\thanks{Shan Zhang is with the School of Computer Science and Engineering, Beihang University, Beijing, China, 100191. 	
E-mail: zhangshan18@buaa.edu.cn.}
}

\maketitle

\begin{abstract}
Mobile edge computing is beneficial to reduce service response time and core network traffic by pushing cloud functionalities to network edge.
Equipped with storage and computation capacities, edge nodes can cache  services of resource-intensive and delay-sensitive mobile applications and process the corresponding computation tasks without outsourcing to central clouds.
However, the heterogeneity of edge resource capacities and inconsistence of edge storage and computation capacities make it difficult to jointly fully utilize the storage and computation capacities when there is no cooperation among edge nodes.
To address this issue, we consider cooperation among edge nodes and investigate cooperative service caching and workload scheduling in mobile edge computing.
This problem can be formulated as a mixed integer nonlinear programming problem, which has non-polynomial computation complexity.
To overcome the challenges of subproblem coupling, computation-communication tradeoff, and edge node heterogeneity, we develop an iterative algorithm called ICE.
This algorithm is designed based on Gibbs sampling, which has provably near-optimal results, and the idea of water filling, which has polynomial computation complexity.
Simulations are conducted and the results demonstrate that our algorithm can jointly reduce the service response time and the outsourcing traffic compared with the benchmark algorithms.
\end{abstract}

\begin{IEEEkeywords}
edge service caching, workload scheduling, mobile edge computing
\end{IEEEkeywords}

\section{Introduction}
The proliferation of mobile devices and the advancement of Internet of things are promoting the emergence of resource-intensive and delay-sensitive mobile applications, such as objective recognition, augmented reality, and mobile gaming.
Mobile cloud computing proposes to offload these applications to central clouds, which, however, suffers from the uncontrolled wide area network delay and is hard to guarantee the quality of service of delay-sensitive applications \cite{Cuervo2010MAUI,Chun2011clonecloud,satyanarayanan2009case}.
Moreover, according to the prediction of Cisco, the growth rate of mobile data required to be processed will far exceed the capacity of central clouds in 2021 \cite{Cisco2018cloud}.
Limiting the outsourcing traffic to central clouds becomes a critical concern of network operators.
Mobile edge computing has emerged as a promising solution to addressing above concerns \cite{ETSI2016edge,ha2013just}.
A typical form of mobile edge computing is to endow mobile base stations (also named as edge nodes) with cloud-like functions by deploying storage and computation capacities distributedly.
Through caching the services (including the program codes and the related databases) of mobile applications at edge nodes, mobile edge computing is able to process the corresponding computation tasks at network edge, benefiting from the reduced service response time and outsourcing traffic to central clouds.

Compared with mobile cloud computing which has elastic resource capacity, the main limitation of mobile edge computing is the limited resource capacities of edge nodes.
When there is no cooperation among edge nodes, the edge resource capacities are prone to be under-utilized for two reasons.
First, the heterogeneity of edge resource capacities can cause resource under-utilization.
For an edge node that has insufficient storage capacity to cache a service or cannot provide sufficient computation capacity for an application, the corresponding computation tasks have to be outsourced to central clouds rather than to nearby powerful edge nodes, resulting in under-utilization of edge resources \cite{xu2018joint}.
Moreover, the inconsistence of storage and computation capacities of edge nodes further aggravates edge resource wasting.
An edge node with large computation capacity cannot process substantial computation tasks when it has insufficient storage capacity to cache the services, leading to under-utilization of edge computation capacities.
To fully utilize both the storage and computation capacities of edge nodes, it is crucial to explore the potential of cooperation among edge nodes .
\begin{figure}
  \centering
  \includegraphics[width=0.6\textwidth]{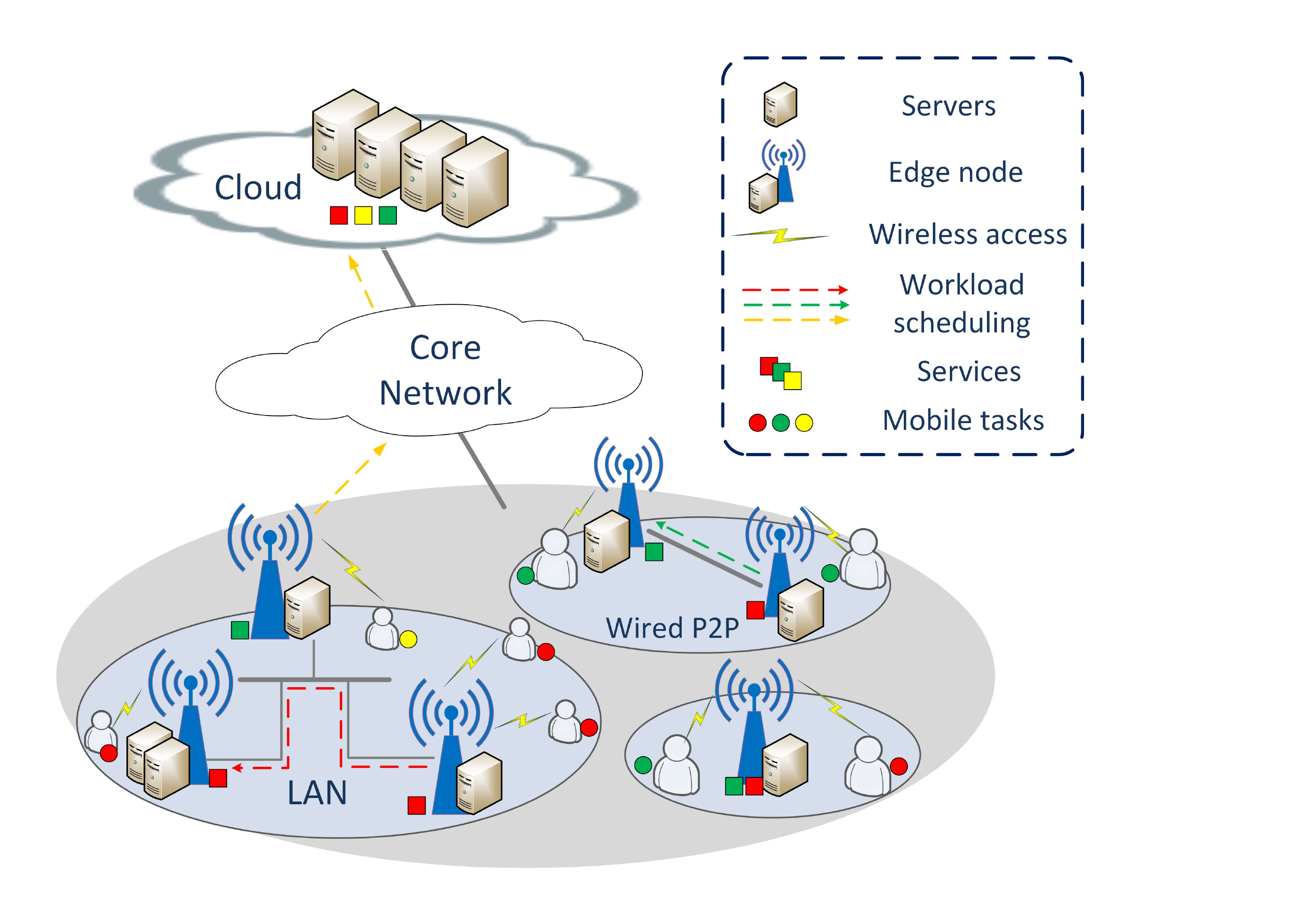}\\
  \caption{Cooperative service caching and workload scheduling in mobile edge computing.}\label{fig:overview}
\end{figure}

In this paper, we consider cooperation among edge nodes and investigate cooperative service caching and workload scheduling in mobile edge computing.
As shown in Fig. \ref{fig:overview}, nearby edge nodes are connected by local area network or wired peer-to-peer connection \cite{chen2018computation}.
For an edge node that is not caching a service or does not provide sufficient computation capacity, the corresponding computation tasks can be offloaded to nearby under-utilized edge nodes that have cached the service or outsourced to the cloud.
Through exploiting the cooperation among edge nodes, the heterogeneous edge resource capacities can be fully utilized and the resource capacity inconsistence of individual edge nodes can be alleviated.
The existing work which considers edge cooperation and jointly optimizes service caching and workload scheduling has sought to maximize the overall requests served at edge nodes while ensuring the service caching cost within the budget \cite{he2018s,farhadi2019service}.
However, it is hard to determine the exact value of the budget in practical scenarios.
Furthermore, while the reduced delay is the main advantage of mobile edge computing, the service response time is not considered as a performance criteria in the existing work.
In this paper, we investigate the cooperative service caching and workload scheduling with the objective of minimizing the service response time as well as the outsourcing traffic (denoted as problem 1).

Solving this problem is challenging in three folds.
First, service caching and workload scheduling are coupled.
Service caching policies determine the decision space of workload scheduling,
and in return, the workload scheduling results reflect the performance of the service caching policies.
Solving problem 1 needs to consider the interplay between the two subproblems.
Second, minimizing the service response time requires to properly trade off the computation and the transmission delay.
While offloading computation tasks from overloaded edge nodes to nearby under-utilized edge nodes is beneficial reduce the computation delay, task offloading causes additional transmission delay on LAN.
Solving problem 1 optimally should deal with the computation-communication tradeoff.
Third, solving problem 1 needs to deal with the heterogeneity of edge resource capacities.
Edge nodes are heterogeneous in both the storage and computation capacities.
Optimizing problem 1 needs to balance the workloads among the heterogeneous edge nodes, causing exponential computation complexity.
How to deal with edge heterogeneity and design algorithms with reduced computation complexity is challenging.

To deal with the challenge of subproblem coupling, we formulate problem 1 as a mixed integer nonlinear programming problem to jointly optimize service caching and workload scheduling.
A two-layer Iterative Caching updatE (ICE) algorithm is designed to illustrate the interplay of the two subproblems, with the outer layer updating the edge caching policies iteratively based on Gibbs sampling (the service caching subproblem) and the inner layer optimizing the workload scheduling polices (the workload scheduling subproblem).
To properly trade off the computation and communication delay, we use queuing models to analyze the delay in each part of the system and thereby compute the average service response time.
A proper computation-communication tradeoff can be achieved when the average service response time is minimized.
To deal with the exponential complexity of workload scheduling caused by edge heterogeneity, we exploit the convexity of the workload scheduling subproblem and propose a heuristic workload scheduling algorithm with polynomial computation complexity based on the idea of water filling.

The contributions of this paper are summarized as follows:
\begin{itemize}
  \item We investigate cooperative service caching and workload scheduling in mobile edge computing, aiming at minimizing service response time and outsourcing traffic.
      We formulate this problem as a mixed integer non-linear programming problem and show the non-polynomial complexity by analyzing simplified cases of this problem.
  \item We use queuing models to analyze the delay in each part of the system, based on which the convexity of the workload scheduling subproblem is proved.
  \item We propose two-layer ICE to solve problem 1, with the outer layer updating the service caching policies iteratively based on Gibbs sampling and inner layer optimizing the workload scheduling policies.
      By exploiting the convexity of the workload scheduling subproblem, we further propose a heuristic workload scheduling algorithm with reduced computation complexity based on the idea of water-filling.
  \item We conduct extensive simulations to evaluate the effectiveness and the convergence of the proposed algorithm.
\end{itemize}

This paper is organized as follows.
Section II reviews the related work.
Section III analyzes the system model and provides problem formulation.
In Section IV, algorithm design is presented in detail, and Section V illustrates simulation results.
Finally, the concluding remarks are given in Section VI.
\section{Related Work}
Mobile edge computing has been envisioned as a promising computing paradigm with the benefits of reduced delay and lower outsourcing traffic.
Due to the limited storage and computation capacities of edge nodes, properly placing services of mobile applications and scheduling computation tasks among edge nodes are crucial to optimize the quality of services with high resource efficiency.
There has been extensive work devoted to workload scheduling, service caching, or joint service caching and workload scheduling.

Although mobile edge computing enables mobile users to access powerful resources within one-hop range \cite{liang2017}, a lot of prior work has evolved to allow task offloading to edge nodes (or the remote cloud) within more than one hop and solve the workload scheduling problem.
Online workload scheduling among edge-clouds has been studied in \cite{wang2017online,tan2017online} to accommodate dynamic requests in mobile edge computing
In \cite{tong2016hierarchical}, Tong \emph{et al.} have developed a hierarchical architecture of edge cloud servers and optimized workload placement in this architecture.
Cui \emph{et al.} \cite{cui2017software} have proposed the software defined control over request scheduling among the cooperative mobile cloudlets.
All the above work on workload scheduling has a common assumption that each edge node (also named as edge-cloud or cloudlet) have cached all the services and can process any types of computation tasks, which is impractical due to the limited storage capacities of edge nodes.
Service caching among edge nodes should also be taken into consideration.

Caching services at edge nodes is an effective approach to relieving the burden of backhual network and the central clouds, and increasing efforts are devoted to edge service caching.
Borst \emph{et al.} \cite{borst2010distributed} have presented popularity-based distributed caching algorithms for content distribution network (CDN).
Dynamic edge service caching has been extensively studied in \cite{dan2014dynamic,hou2016asymptotically,wang2016dynamic,zhang2013dynamic}.
Prediction-based content placement has been investigated in \cite{dan2014dynamic} and approximations of dynamic content allocation have been provided for the hybrid system of cloud-based storage and CDN.
History-based dynamic edge caching have been proposed in \cite{hou2016asymptotically} without predicting future requests or adopting stochastic models.
In mobile edge computing, due to the limitation of both storage and computation capacities of edge nodes, service caching and workload scheduling should be jointly optimized to improve the system quality of service with high resource efficiency.

Joint optimization of edge caching and request routing for data-intensive applications (such as video streaming) has been studied in \cite{dehghan2017complexity} to minimize the average access delay.
Nevertheless, this work cannot be directly applied to applications which are both data-intensive and computation-intensive (such as augmented reality) and need to consider computation delay at edge nodes.
To address the above issue, joint optimization of service caching and workload scheduling have been investigated in \cite{xu2018joint,he2018s,farhadi2019service}.
 The work \cite{xu2018joint} has jointly optimized service caching and task offloading without considering cooperation among edge nodes, which can lead to under-utilization of heterogeneous edge resource capacities.
The work \cite{he2018s} and \cite{farhadi2019service} have investigated joint service caching and request scheduling without taking the service response time (including the transmission delay and the computation delay) as the performance criteria, which cannot highlight the benefit of reduced delay in mobile edge computing.
Different from the existing work, we study the cooperative service caching and workload scheduling in mobile edge computing, aiming at minimizing the service response time and the outsourcing traffic to central clouds.
We solve this problem by developing the iterative caching update algorithm based on Gibbs sampling and further proposing the heuristic workload scheduling algorithm with polynomial complexity based on the idea of water filling.
%
%

\section{System Model and Problem Formulation}
\subsection{System Model}
In this paper, we investigate cooperative service caching and workload scheduling in mobile edge computing.
As shown in Fig. \ref{fig:overview}, nearby edge nodes are connected by local area network or wired peer-to-peer connection. For an edge node that is not caching a service or does not provide sufficient computation capacity, the corresponding computation tasks can be offloaded to nearby under-utilized edge nodes that have cached the
service or outsourced to the cloud.
We consider a multi-edge system consisting of a set of $\mathds{N}=\{1,2,...,N\}$ edge nodes, each of which is equipped with the  computation capacity  $R_n$ ($n\in \mathds{N}$) and storage capacity $P_n$ ($n\in \mathds{N}$).
The system provides a library of $\mathds{S}=\{1,2,...S\}$ services, such as mobile gaming, object recognition, video streaming, etc, which are differentiated by the computation and storage requirements.
To process a type of mobile application at network edge, an edge node should provision certain storage capacity to cache the service of the application.
Let $p_s$ be the required storage capacity to cache service $s$.
For each service $s$, we consider that the computation requests of the corresponding computation task (in CPU cycles) follow exponential distribution with the expectation of $\beta_s$, and the task arrival at each edge node $n$ is a Poisson process with the expected rate $A_{ns}$, which is a general assumption \cite{xu2018joint}.
There is a centralized cloud with ample storage and computation capacity, thus the cloud stores all the services and the processing delay in the cloud $d_{\rm{cloud}}$ is mainly caused by the transmission delay from edge nodes to the cloud.
\subsubsection{Edge Caching and Workload Scheduling Policies}
Two questions should be answered in this study: 1) which edge nodes cache each type of service? and 2) how to schedule the computation workloads among the connected edge nodes that have cached the same services?
We use two set of variables to model the edge caching and workload scheduling results: $c_{ns}$ indicates whether service $s$ is cached at edge node $n$, and $\lambda_{ns}$ represents the workload ratio of service $s$ that are executed at edge node $n$.
We refer by \emph{edge caching} and \emph{workload scheduling} policies to the respective vectors:
\begin{equation}\label{equa:definition}
  \begin{array}{l}
\bm{C} = ({c_{ns}} \in \{ 0,1\} :n \in \mathds{N}, s \in \mathds{S}),\\
\bm{\Lambda }  = ({\lambda _{ns}} \in [ 0,1] :n \in \mathds{N}\cup\{o\}, s \in \mathds{S}).
\end{array}
\end{equation}

Denote by $\bm{c_n}=(c_{ns}: s\in \mathds{S})$ the caching decision of edge node $n$, and $\bm{C_n}$ the action space of $\bm{c_n}$, i.e., $\bm{c_n}\in \bm{C_n}$.
The services cached at each edge node cannot exceed the storage capacity, i.e.,
\begin{equation}\label{equa:storage constraint}
  \sum\limits_{s \in \mathds{S}} {{c_{ns}}{p_s}}  \le {P_n}.
\end{equation}
Let $\lambda_{os}$ denote the workload ratio of service $s$ outsourced to the cloud, there is
 \begin{equation}\label{equa:scheduling equation}
   \sum\limits_{n \in \mathds{N} \cup \{ o\} } {{\lambda _{ns}}}  =1.
 \end{equation}
\subsubsection{Service Response Time}
Denote by $\Theta_n$ the set of nearby edge nodes that have direct connection with edge node $n$, and $d_n$ the transmission delay on LAN to edge node $n$.
The computation workload executed at edge node $n$ should be no more than the overall arriving tasks of nearby edge nodes,
\begin{equation}\label{equa:nearby arrival constraint}
 {\lambda _{ns}}A_s \le \sum\limits_{i \in {\Theta _n} \cup \{ n\} } {{A_{is}}},
\end{equation}
where $A_s$ is the overall computation workload of service $s$ in the system, i.e. ${A_s} = \sum\limits_{n \in N} {{A_{ns}}} $.
We can notice that if $ {\lambda _{ns}}A_s\le A_{ns}$, all the tasks are from edge node $n$; Otherwise, the excessive tasks (${\lambda _{ns}}A_s- A_{ns}$) are from nearby edge nodes.

At each edge node, the computation capacity is shared by the cached services.
Let the function $\Gamma_n$ represent the computation allocation mechanism at edge node $n$, i.e., the computation capacity allocated to service $s$ is ${r _{ns}}=\Gamma_n(\bm{C})$.
 For each service $s$, as the computation requests of the responding computation task follow exponential distribution, the serving time at edge node $n$ also follows exponential distribution with the expectation $\frac{\beta_s}{r_{ns}}$.
 Moreover, the task arrival of service $s$ at edge node $n$ is a Poisson process with the expectation $\lambda_{ns}A_s$.
Thus for each service $s$, the serving process of computation tasks at edge node $n$ can be modeled as an M/M/1 queue, and the computation delay is
 \begin{equation}\label{equa: edge computation delay}
   D_{ns}=\frac{1}{\mu_{ns}-\lambda_{ns}A_s},
 \end{equation}
 where $\mu_{ns}=\frac{r_{ns}}{\beta_s}$.
 To ensure the stability of the queue, there should be
 \begin{equation}\label{equa: queue stability}
   \lambda_{ns}A_s<\mu_{ns}.
 \end{equation}
 By combining Eq. (\ref{equa:nearby arrival constraint}) and Eq. (\ref{equa: queue stability}), $\lambda_{ns}$ is constrained as
  \begin{equation}
   \lambda_{ns}A_s\le\min \{ \sum\limits_{i \in {\Theta _n} \cup \{ n\} } {{A_{is}}} ,{\mu _{ns}} - \varepsilon \},
 \end{equation}
 where $\varepsilon>0$.

When outsourcing tasks to the cloud, the processing time is mainly caused by the transmission delay in the core network.
Similar as the task arrivals at edge nodes, the task arrival in the core network is also a Poisson process, with the expected rate $\lambda_{os}A_s$.
Let $t_s$ be the amount of transmission requests (e.g. input data) when outsourcing one unit of computation requests for service $s$ (in CPU cycle).
Here, $t_s$ is a constant related to the specific service $s$ \cite{chen2018computation}, \cite{cui2017software}.
Then for the service $s$, the transmission requests of a corresponding task follow exponential distribution with the expectation $t_s\beta_s$.
The transmitting time of a task in the core network also follows exponential distribution with the expectation $\frac{t_s\beta_s}{B_s}$, where $B_s$ represents the core network bandwidth to transmit service $s$.
Hence, the transmitting delay in the core network is given as
\begin{equation}\label{equa:cloud delay}
  d_{\rm{cloud}}=\frac{1}{{\frac{{{B_s}}}{{{t_s}{\beta _s}}} - {\lambda _{os}}{A_s}}},
\end{equation}
where
\begin{equation}
{\lambda _{os}}{A_s} < \frac{{{B_s}}}{{{t_s}{\beta _s}}}.
\end{equation}

 The average response time of service $s$ can be computed as a weighted sum of delay at each part of the system, including the computation delay at edge nodes, the transmission delay on LAN and the transmission delay to the cloud, i.e.,
 \begin{equation}\label{equa: service response time}
   {D_s} = \sum\limits_{n \in \mathds{N}} {\left[ {{{\lambda _{ns}D_{ns}}} + \frac{{\max \{ {\lambda _{ns}} A_s- {A_{ns}},0\} }}{{{A_s}}}{d_n}} \right]}  + {{\lambda _{os}}}{d_{{\rm{cloud}}}}.
 \end{equation}
 Here, ${\frac{{\max \{ {\lambda _{ns}}A_s - {A_{ns}},0\} }}{{{A_s}}}}$ represents the ratio of the workload offloaded to edge node $n$ from nearby edge nodes.
\subsection{Problem Formulation}
 This paper jointly optimizes the edge service caching and workload scheduling policies, aiming at minimizing the service response time and the overall outsourcing traffic to the cloud:
 \begin{equation}\label{equa: problem formulation}
   \begin{aligned}
{\textbf{P1}}:&\mathop {\min }\limits_{{\bf{C}},{\bf{\Lambda }}} \sum\limits_{s \in \mathds{S}} {({D_s} + {w_s}{\lambda _{os}}A_s)} \\
s.t.&\sum\limits_{n \in N \cup \{ o\} } {{\lambda _{ns}}}  = 1,~~~~~~~~~~~~~~~~~~~~~~~s \in \mathds{S}\\
\rm{C1}:&~~\sum\limits_{s \in S} {{c_{ns}}{p_s}}  \le {P_n},~~~~~~~~~~~~~~~~~~~~~~n \in \mathds{N}\\
\rm{C2}:&~~ {\lambda _{ns}}A_s \le \min \{ \sum\limits_{i \in {\Theta _n} \cup \{ n\} } {{A_{is}}} ,{\mu _{ns}} - \varepsilon \} ,n \in \mathds{N},s \in \mathds{S}\\
\rm{C3}:&~~{\lambda _{os}}{A_s} \le \frac{{{B_s}}}{{{t_s}{\beta _s}}}- \varepsilon,\\
\rm{C4}:&~~ {\lambda _{ns}}\ge 0,~~~~~~~~~~~~~~~~~~~~~~~~~~~~~~n \in \mathds{N}\cup\{o\},s \in \mathds{S}\\
\rm{C5}:&~~{c_{ns}} \in \{ 0,1\}. ~~~~~~~~~~~~~~~~~~~~~~~~~~n \in \mathds{N},s \in \mathds{S}
\end{aligned}
 \end{equation}
Here $w_s$ is a weight constant which is positively related to the transmitted data traffic when outsourcing tasks of service $s$.
Constraint C1 ensures the cached services at each edge node do not exceed the storage capacity.
C2 is the combined result of Eq. (\ref{equa:nearby arrival constraint}) and Eq. (\ref{equa: queue stability}), ensuring that each edge node only admits computation requests from nearby edge nodes, and the computation workload scheduled to each edge node does not exceed the computation capacity for each service.
\subsection{Complexity Analysis}
Problem \textbf{P1} is a mixed integer nonlinear programming problem.
In this section, we present the non-polynomial computation complexity of \textbf{P1} by analyzing the simplified cases including \emph{non-cooperation among edge nodes} and \emph{considering one single type of service}.
\subsubsection{Simplified Case 1: Non-cooperation among Edge Nodes}
In the first case, we assume that there is no cooperation among edge nodes.
With this assumption, the computation tasks of different services are either processed locally or directly outsourced to the cloud.
Thus, the computation tasks outsourced to the cloud are not only decided by the edge computation capacity, but also highly dependent on the storage capacity of each individual edge node.
In this scenario, problem \textbf{P1} is reduced to the \emph{service caching and task oursourcing} problem, similar as \cite{xu2018joint}.
Specifically, workload scheduling among edge nodes in \textbf{P1} is reduced to $N$ independent task outsourcing subproblems.
Each edge node only needs to decide the oursourced computation requests $\lambda_{o_ns}$ ( which is given as $ \lambda_{o_ns}=1-\lambda_{ns}$) according to its own service caching policy and the computation capacity limitation.
It is indicated in \cite{xu2018joint} that the reduced \emph{service caching and task outsourcing} problem remains challenging since it is still a mixed integer nonlinear programming problem and has an non-polynomial computation complexity.
\subsubsection{Simplified Case 2: Considering One Single Type of Service}
In this simplified case, we assume that only one single type of service is considered in the system.
Then, the caching result at each edge node can be simply determined by the relationship of the service storage requirement and the edge storage capacity: The service is cached at one edge node if it has ample storage capacity; Otherwise, the service is not cached at the edge node.
With this assumption (i.e., the service caching policy \bm{$C$} is given), problem \textbf{P1} is reduced to a \emph{workload scheduling} problem, which schedules computation workloads among the edge nodes that have sufficient storage capacity to cache the service.

Solving the \emph{workload scheduling} problem is challenging in two aspects.
First, edge nodes are heterogeneous in both computation task arrivals and edge computation capacities.
Balancing the workloads among the heterogeneous edge nodes  is critical to minimize the service response time and the outsourced traffic to the cloud, which, however, can cause exponential computation complexity when achieved in a centralized manner.
Second, scheduling workloads among edge nodes should consider the computation-transmission tradeoff.
Offloading computation tasks from overloaded edge nodes to nearby light-loaded edge nodes or to the cloud is beneficial to reduce the computation delay, but meanwhile causes additional transmission delay.
Minimizing service response time demands to properly trade off the computation and transmission delay.

By summarizing the above two simplified cases of problem \textbf{P1}, both the reduced \emph{service caching and task outsourcing} and \emph{workload scheduling} problems have non-polynomial computation complexity.
Therefore, problem \textbf{P1} also has non-polynomial computation complexity and it is crucial to solve this problem with reduced computation complexity.
\section{Algorithm Design}
As clarified in the above section, even the simplified cases of problem \textbf{P1} remain to have non-polynomial computation complexity.
This section presents the main idea of algorithm design which jointly optimizes the service caching and workload scheduling policies with reduced computation complexity.
Specifically, we design a two-layer Iterative Caching updatE algorithm (ICE), with the outer layer updating service caching policies based on Gibbs sampling \cite{Lynch2007Introduction}.
In inner layer, the edge caching policies are given and problem \textbf{P1} is reduced to the workload scheduling subproblem among the edge nodes that have cached a certain type of service (similar to Simplified case 2).
We demonstrate the exponential computation complexity of the reduced problem with convexity analysis and further propose a heuristic workload scheduling algorithm (Algorithm \ref{algorithm:heuristic workload scheduling algorithm}) with reduced computation complexity based on the idea of water filling.
\subsection{Iterative Caching updatE Algorithm (ICE)}
Gibbs sampling is a Monte Carlo Markov Chain technique, which can deduce the joint distribution of several variables from the conditional distribution samples.
The main idea of Gibbs sampling is to simulate the conditional samples by sweeping through each variable while maintaining the rest variables unchanged in each iteration.
The Monte Carlo Markov Chain theory guarantees that the stationary distribution deduced from Gibbs sampling is the target joint distribution \cite{Gilks1996Markov}.
In this work, we exploit the idea of Gibbs sampling to determine the optimal service caching policies iteratively, as shown in Algorithm \ref{algorithm:iterative caching update algorithm}.
The key point of the algorithm is to associate the conditional probability distribution of edge caching policies with the objective of \textbf{P1} (Step 7).
Through properly designing the conditional probability in each iteration, the deduced stationary joint distribution can converge to the optimal edge caching policies with high probability.

The ICE algorithm works as follows.
In each iteration, randomly select an edge node $n$ ($n\in \mathds{N}$) and a feasible edge caching decision $\bm{c_n}^*$ while maintaining the caching decisions of the rest edge nodes unchanged (Step 3).
With the given caching policies of all the edge nodes, \textbf{P1} is reduced to the workload scheduling subproblem:
\begin{equation}\label{equa: workload scheduling}
   \begin{aligned}
{\textbf{P2}}:&\mathop {\min }\limits_{{\bf{\Lambda }}} \sum\limits_{s \in \mathds{S}} {({D_s} + {w_s}{\lambda _{os}}A_s)} \\
s.t.&\sum\limits_{n \in N \cup \{ o\} } {{\lambda _{ns}}}  = 1,~~~~~~~~~~~~~~~~~~~~~~~~~~~s \in \mathds{S}\\
&~~ {\lambda _{ns}}A_s \le \min \{ \sum\limits_{i \in {\Theta _n} \cup \{ n\} } {{A_{is}}} ,{\mu _{ns}} - \varepsilon \} ,n \in \mathds{N},s \in \mathds{S}\\
&~~{\lambda _{os}}{A_s} \le \frac{{{B_s}}}{{{t_s}{\beta _s}}}- \varepsilon,\\
&~~ {\lambda _{ns}}\ge 0.~~~~~~~~~~~~~~~~~~~~~~~~~~~~~~n \in \mathds{N}\cup \{o\},s \in \mathds{S}\\
\end{aligned}
 \end{equation}
 After solving \textbf{P2}, we can compute the optimal objective value $y$ (defined as $y=\mathop{\min}\limits_{{\bf{\Lambda }}} \sum\limits_{s \in \mathds{S}} {({D_s} + {w_s}{\lambda _{os}})}$).
Assume that when the selected edge node $n$ changes its caching decision from $\bm{c_n}$ to $\bm{c_n}^*$, the optimal objective value varies from $y$ to $y^*$.
Associate the conditional probability distribution of edge caching policies with the objective value as:
the selected edge node $n$ changes its caching decision from $\bm{c_n}$ to $\bm{c_n}^*$ with the probability $\rho=\frac{1}{{1{\rm{ + }}{e^{({{y}^{*}} - y)/\omega }}}}$ ($\omega>0$) and maintains the current caching decision $\bm{c_n}$ with $1-\rho$ (Step 7).
Finally, the iteration is ended if the stop criteria is satisfied.
%

\begin{algorithm}
\caption{Iterative Caching updatE Algorithm (ICE)}
\label{algorithm:iterative caching update algorithm}
\begin{algorithmic}[1]
\REQUIRE ~~ \\
$A_{ns}~(n\in\mathds{N}, s \in \mathds{S})$, $p_s$, $\beta_s$ ($s \in \mathds{S})$)\\
\ENSURE ~~ \\
The edge caching policy \bm{$C$} and the workload scheduling policy \bm{$\Lambda$}.
\STATE Initialize $\bm{C}^0\leftarrow \bm{0}$.\\
\FOR {iteration $i=1,2,...$}
\STATE Randomly select an edge node $n\in\mathds{N}$ and an edge caching decision $\bm{ {{c}}}_n^{*}\in \bm{C}_n$.
\IF {$\bm{ {{c}}}_n^{*}$ is feasible}
\STATE Based on the edge caching policy $(\bm{c}_1^{i-1},..,\bm{c}_n^{i-1},..,\bm{c}_N^{i-1})$, compute the optimal workload scheduling policy \bm{$\Lambda$} and the responding $y$ by solving \textbf{P2}.
\STATE Based on the edge caching policy $(\bm{c}_1^{i-1},...,\bm{ {{c}}}_n^{*},...,\bm{c}_N^{i-1})$, compute the optimal workload scheduling policy $\bm{{\Lambda}}^{*}$ and the responding ${y^{*}}$ by solving \textbf{P2}.
\STATE Let $\bm{c}_n^{i}=\bm{ {{c}}}_n^{*}$ with the probability $\rho=\frac{1}{{1{\rm{ + }}{e^{({{y}^{*}} - y)/\omega }}}}$, and $\bm{c}_n^{i}=\bm{c}_n^{i-1}$ with the probability $1-\rho$.
\ENDIF
\IF {the stopping criteria is satisfied}
\STATE End the iteration and return $\bm{C}^i$, $\bm{\lambda}^i$.
\ENDIF
\ENDFOR
\end{algorithmic}
\end{algorithm}
ICE has the following property.
\begin{Theo}\label{theorem:convengence property}
 ICE can converge to the globally optimal solution of problem \textbf{P1} with a higher probability as $\omega$ decreases.
When $\omega\rightarrow 0$, the algorithm converges to the globally optimal solution with the probability of 1.
\end{Theo}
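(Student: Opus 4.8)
The plan is to recognize the ICE outer loop as a Metropolis-type Markov chain over the finite state space of feasible caching profiles $\bm{C}$, and to show its stationary distribution is a Gibbs (Boltzmann) distribution concentrated on the minimizers of the \textbf{P1} objective as the temperature $\omega\to 0$. First I would fix notation: let $\mathcal{X}=\{\bm{C}:\ \bm{C}\text{ satisfies }C1,\ \bm{c}_n\in\bm{C}_n\}$ be the (finite) feasible set, and for each $\bm{C}\in\mathcal{X}$ let $G(\bm{C})$ denote the optimal value of the inner problem \textbf{P2} given that caching profile (this is well defined because, as argued in the convexity analysis, \textbf{P2} is a convex program with a compact nonempty feasible region whenever $\bm{C}$ is feasible). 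The key observation is that Step 3 together with Step 7 defines a transition kernel $P_\omega$ on $\mathcal{X}$: from state $\bm{C}$ we pick an edge node $n$ uniformly and a candidate $\bm{c}_n^*\in\bm{C}_n$ uniformly, and if the resulting profile $\bm{C}^*$ is feasible we move there with probability $\rho=1/(1+e^{(G(\bm{C}^*)-G(\bm{C}))/\omega})$, else we stay. I would write out $P_\omega(\bm{C},\bm{C}^*)$ explicitly for $\bm{C}^*\neq\bm{C}$ and note it is symmetric in the proposal part.

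Next I would verify that the chain is irreducible and aperiodic on $\mathcal{X}$. Aperiodicity is immediate since $P_\omega(\bm{C},\bm{C})>0$ for every state (there is always positive probability of proposing an infeasible or rejected move). For irreducibility I would argue that any feasible profile can be reached from any other by single-node changes: since $\bm{0}\in\mathcal{X}$ and each $\bm{c}_n=\bm{0}$ is reachable from any $\bm{c}_n$ by allowed single-node proposals (emptying one node at a time), the all-empty profile is a common communicating state, so all states communicate. Then I would exhibit the candidate stationary distribution $\pi_\omega(\bm{C})\propto e^{-G(\bm{C})/\omega}$ and check the detailed-balance equation $\pi_\omega(\bm{C})P_\omega(\bm{C},\bm{C}^*)=\pi_\omega(\bm{C}^*)P_\omega(\bm{C}^*,\bm{C})$ for adjacent feasible $\bm{C},\bm{C}^*$; this reduces, after cancelling the symmetric proposal probability $\tfrac1N\cdot\tfrac{1}{|\bm{C}_n|}$, to the identity $e^{-G(\bm{C})/\omega}\,\tfrac{1}{1+e^{(G(\bm{C}^*)-G(\bm{C}))/\omega}} = e^{-G(\bm{C}^*)/\omega}\,\tfrac{1}{1+e^{(G(\bm{C})-G(\bm{C}^*))/\omega}}$, which holds by a one-line algebraic manipulation (multiply numerator and denominator of one side by $e^{(G(\bm{C})-G(\bm{C}^*))/\omega}$). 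Detailed balance plus irreducibility and aperiodicity gives, by the standard ergodic theorem for finite Markov chains, that $P_\omega$ has the unique stationary distribution $\pi_\omega$ and that the chain converges to it from any start.

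Finally I would take the low-temperature limit. Writing $G_{\min}=\min_{\bm{C}\in\mathcal{X}}G(\bm{C})$ and $\mathcal{X}^\star=\arg\min$, for any $\bm{C}$ we have
\begin{equation}
\pi_\omega(\bm{C})=\frac{e^{-(G(\bm{C})-G_{\min})/\omega}}{\sum_{\bm{C}'\in\mathcal{X}}e^{-(G(\bm{C}')-G_{\min})/\omega}}\ \xrightarrow[\omega\to 0]{}\ \frac{\mathds{1}\{\bm{C}\in\mathcal{X}^\star\}}{|\mathcal{X}^\star|},
\end{equation}
since every non-optimal exponent tends to $-\infty$ while optimal ones stay at $1$; hence $\sum_{\bm{C}\in\mathcal{X}^\star}\pi_\omega(\bm{C})\to 1$. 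Moreover $\pi_\omega(\bm{C})$ is monotonically increasing in the "coldness" $1/\omega$ for any strict minimizer once $\omega$ is small enough, which is the precise sense of "higher probability as $\omega$ decreases." Combining with the convergence $P_\omega^t(\bm{C}_0,\cdot)\to\pi_\omega$ and noting that the inner-layer optimum recovered in Steps 5–6 is by construction a global optimizer of \textbf{P1} restricted to the returned $\bm{C}$, we conclude ICE converges in distribution to the globally optimal solution of \textbf{P1} with probability approaching $1$ as $\omega\to 0$, and with probability exactly $1$ in the limit. The main obstacle I anticipate is the bookkeeping around feasibility: the proposal is not symmetric on the nose because an infeasible candidate forces a self-loop, so I must be careful that detailed balance only needs to be checked between pairs of \emph{feasible} neighboring states (self-loops never break reversibility), and that irreducibility still holds when the feasible set $\mathcal{X}$ is carved out by the knapsack constraint $C1$ — which it does, via the all-empty profile as a hub.
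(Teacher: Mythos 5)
Your proposal is correct and follows essentially the same route as the paper's proof: both identify the outer loop as a reversible Markov chain over feasible caching profiles, verify detailed balance against the Gibbs distribution $\pi_\omega(\bm{C})\propto e^{-G(\bm{C})/\omega}$ induced by the logistic acceptance rule, and take the low-temperature limit to concentrate the stationary mass on the minimizers. You are in fact somewhat more complete than the paper, which only writes out the two-node case and does not explicitly establish irreducibility, aperiodicity, or the feasibility bookkeeping that you handle via the all-empty profile.
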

\begin{proof}
Please refer to Appendix \ref{proof:proof of convergence property}.
\end{proof}
\noindent
Remark: Theorem \ref{theorem:convengence property} demonstrates that in each iteration of the Gibbs sampling technique, through properly selecting $\omega$ in $\rho=\frac{1}{{1{\rm{ + }}{e^{({y^*} - y)/\omega }}}} ~(\omega>0)$ which associates the service caching update process with the objective value, the algorithm can converge to the optimal edge caching policy with high probability.
\subsection{Heuristic Workload Scheduling Algorithm}
When the edge caching policy is given, problem \textbf{P2} should be solved to compute the optimal workload scheduling policy and the corresponding object value.
In this part, we first demonstrate the exponential complexity of \textbf{P2} through theoretical analysis and further propose a heuristic workload scheduling algorithm by exploiting the convexity of the problem.
\subsubsection{Computation Complexity of \textbf{P2}}
Substitute Eq. (\ref{equa: edge computation delay}) and (\ref{equa: service response time}) into \textbf{P2}, and the objective function $f(\bm{\Lambda})$ can be rewritten as
\begin{equation}\label{equa: objective function}
\begin{aligned}
f(\bm{\Lambda})=&\sum\limits_{s \in S} {({D_s} + {w_s}{\lambda _{os}}{A_s})}  \\
=& \sum\limits_{s \in S} {\sum\limits_{n \in N} {(\frac{{{\lambda _{ns}}}}{{{\mu _{ns}} - {\lambda _{ns}}{A_s}}} + \frac{{\max \{ {\lambda _{ns}}{A_s} - {A_{ns}},0\} }}{{{A_s}}}{d_n}} )}
\\&+ \sum\limits_{s \in S} {({\frac{{{\lambda _{os}}}}{{\frac{{{B_s}}}{{{t_s}{\beta _s}}} - {\lambda _{os}}{A_s}}}} + {w_s}{\lambda _{os}}{A_s})}.
\end{aligned}
\end{equation}
\begin{Theo}
\label{theorem:convexity property}
Problem \textbf{P2} is a convex optimization problem over the workload scheduling policy \bm{$\Lambda$}.
\end{Theo}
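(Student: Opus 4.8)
The plan is to establish the two ingredients of a convex program: a convex feasible region and a convex objective. First I would note that in \textbf{P2} the caching policy $\bm{C}$ is fixed, so each $\mu_{ns}=r_{ns}/\beta_s$ with $r_{ns}=\Gamma_n(\bm{C})$ is a constant. Then every constraint of \textbf{P2} is affine in $\bm{\Lambda}$: the equality $\sum_{n}\lambda_{ns}=1$, the upper bounds in C2 (where $\min\{\sum_i A_{is},\mu_{ns}-\varepsilon\}$ is a constant), the bound in C3, and $\lambda_{ns}\ge 0$. Hence the feasible set is a polyhedron, and in particular convex.

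Next I would decompose the objective $f(\bm{\Lambda})$ of Eq. (\ref{equa: objective function}) into a sum of terms, each of which depends on only a single coordinate of $\bm{\Lambda}$: the edge-computation-delay terms $g_{ns}(\lambda_{ns})=\lambda_{ns}/(\mu_{ns}-\lambda_{ns}A_s)$; the LAN-transmission terms $h_{ns}(\lambda_{ns})=(d_n/A_s)\max\{\lambda_{ns}A_s-A_{ns},0\}$; the cloud-transmission terms $g_{os}(\lambda_{os})=\lambda_{os}/(B_s/(t_s\beta_s)-\lambda_{os}A_s)$; and the affine outsourcing-traffic terms $w_s\lambda_{os}A_s$. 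Since a function of a single coordinate that is convex in that coordinate is convex as a function of the whole vector, and a sum of convex functions is convex, it suffices to check convexity of each term separately.

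For $g_{ns}$, I would argue on the region $\mu_{ns}-\lambda_{ns}A_s>0$, which by constraint C2 (with $\varepsilon>0$) contains the entire feasible set: a short computation gives $g_{ns}''(\lambda_{ns})=2\mu_{ns}A_s/(\mu_{ns}-\lambda_{ns}A_s)^3>0$ because $\mu_{ns},A_s>0$; equivalently, $g_{ns}(\lambda_{ns})=\frac{\mu_{ns}}{A_s}\cdot\frac{1}{\mu_{ns}-\lambda_{ns}A_s}-\frac{1}{A_s}$ is the composition of the convex map $x\mapsto 1/x$ (on $x>0$) with an affine function, plus a constant. The identical argument with $\mu_{ns}$ replaced by the positive constant $B_s/(t_s\beta_s)$ and with constraint C3 providing the slack handles $g_{os}$. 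For $h_{ns}$, both $\lambda_{ns}A_s-A_{ns}$ and $0$ are affine in $\lambda_{ns}$, so their pointwise maximum is convex, and multiplication by the positive constant $d_n/A_s$ preserves convexity; the terms $w_s\lambda_{os}A_s$ are affine, hence convex. Summing, $f$ is convex on the feasible polyhedron, so \textbf{P2} minimizes a convex function over a convex set, which is a convex program.

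I do not anticipate a serious obstacle; the only points that need care are (a) recognizing that $\mu_{ns}$ is a constant in \textbf{P2} because the caching decisions are frozen in the inner layer, and (b) verifying that constraints C2 and C3, thanks to the strict $\varepsilon$-margin, keep all denominators bounded away from zero so that the second-derivative and composition arguments are valid on the whole feasible region rather than merely on its interior.
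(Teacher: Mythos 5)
Your proof is correct and follows essentially the same route as the paper's: both reduce the claim to term-by-term convexity of the objective over an affine feasible set, with the key computation being the second derivative $2\mu_{ns}A_s/(\mu_{ns}-\lambda_{ns}A_s)^3>0$ of the edge-computation-delay term (the paper packages this as a diagonal, positive-definite Hessian of $x(\bm{\Lambda})$, which is the same calculation). If anything you are slightly more complete, since you explicitly verify the cloud-delay term via the $1/x$-composed-with-affine argument and note the role of the $\varepsilon$-margin in keeping denominators positive, whereas the paper simply declares those pieces intuitive.
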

\begin{proof}
Please refer to Appendix \ref{proof:proof of convexity}.
\end{proof}
A convex optimization problem can be solved by searching for results satisfying the Karush-Kuhn-Tucker (KKT) conditions \cite{boyd2004convex}.
We first provide the KKT conditions of \textbf{P2}.
When the caching policy is given, the computation resources allocated to each service are determined according to $\Gamma_n(\bm{C})$.
Thus for one service, the workload scheduling policy among edge nodes that have cached the service is independent of the other services.
Solving problem \textbf{P2} is equivalent to optimizing the workload scheduling policy for each type of service.
Task a service $s$ ($s\in\mathds{S}$) as the representative.
Define the Lagrange function as
\begin{equation}\label{equa:Lagrange function}
\begin{aligned}
&{L_s}(\bm{\lambda _s},\bm{\alpha _s},\bm{\eta _s}) \\
=& ({D_s} + {w_s}{\lambda _{os}}) + \sum\limits_{n \in \mathds{N} \cup \{ o\} } {{\alpha _{ns}}({\lambda _{ns}}{A_s} - {\pi _{ns}})} \\
 &- \sum\limits_{n \in \mathds{N} \cup \{ o\} } {{\alpha _{(N + 1 + n)s}}{\lambda _{ns}}}  + {\eta _s}(\sum\limits_{n \in \mathds{N} \cup \{ o\} } {{\lambda _{ns}}}  - 1),
  \end{aligned}
\end{equation}
where $\bm{\alpha _s}$ and $\bm{\eta _s}$ are Lagrange multipliers, and ${\pi _{ns}}$ is the upper bound of the inequation constraints defined as ${\pi _{ns}}=\min \{ \sum\limits_{i \in {\Theta _n} \cup \{ n\} } {{A_{is}}} ,{\mu _{ns}} - \varepsilon \}$ ($n\in\mathds{N}$) and ${\pi _{os}}=\frac{{{B_s}}}{{{t_s}{\beta _s}}}- \varepsilon$.\\
Then the KKT conditions are given as
\begin{equation}\label{equa:KKT conditions}
\begin{aligned}
(\rm{C1})~&\frac{{\partial {L_s}(\bm{\lambda _s},\bm{\alpha _s},\bm{\eta _s})}}{{\partial {\lambda _{ns}}}} = 0,~~~~~~~n \in \mathds{N} \cup \{ o\} \\
(\rm{C2})~&0 \le {\lambda _{ns}}{A_s} \le {\pi _{ns}},~~~~~~~~~~~n \in \mathds{N} \cup \{ o\}\\
(\rm{C3})~&\sum\limits_{n \in N \cup \{ o\} } {{\lambda _{ns}}}   = 1,~~~\\
(\rm{C4})~&{\alpha _{ns}}({\lambda _{ns}}{A_s} - {\pi _{ns}}) = 0,~~~~~n \in \mathds{N} \cup \{ o\} \\
(\rm{C5})~&{\alpha _{(N + 1 + n)s}}{\lambda _{ns}} = 0,~~~~~~~~~~n \in \mathds{N} \cup \{ o\} \\
(\rm{C6})~&{\alpha _{ns}} \ge 0.~~~~~~~~~~~~~~~~~~~~~~~n = \{ 1,2...,2N + 2\}
\end{aligned}
\end{equation}
Here, (C4), (C5) and (C6) arise from the inequation constraints of \textbf{P2}.
For each inequation constraint, there are two possible results in Eq. (\ref{equa:KKT conditions}): 1) ${\alpha _{ns}} = 0,~{\lambda _{ns}}{A_s} < {\pi _{ns}}$ (or $ {\lambda _{ns}}>0$), indicating that the optimal results are at the extreme points derived from (C1); 2) ${\alpha _{ns}} > 0$,${\lambda _{ns}}{A_s} = {\pi _{ns}}$ (or $ {\lambda _{ns}}=0$), indicating the optimal results are at the boundary.
As there are $2(N+1)$ inequation constraints in problem \textbf{P2} (i.e., the computation capacity constraints of edge nodes), directly searching for the results satisfying the KKT conditions can cause $O(2^{2(N+1)})$ computation complexity .
To reduce the computation complexity of \textbf{P2}, we propose the heuristic workload scheduling algorithm.
\subsubsection{Algorithm Design}
The main idea of the algorithm is to first remove the computation capacity constraints of edge nodes and the transmission bandwidth constraint of the core network  (i.e., the inequation constraints in \textbf{P2}) to derive the correlation of workload scheduling results of edge nodes and the cloud.
Then we search for the optimal results satisfying the KKT conditions within the resource constraints.

When removing the inequation constraints, the KKT conditions only keep (C1) and (C3) in (\ref{equa:KKT conditions}), with (C1) changed to
\begin{equation}\label{equa:changed (C1)}
  \frac{{\partial {L_s}(\bm{\lambda _s},\bm{\eta _s})}}{{\partial {\lambda _{ns}}}} =
  \frac{{\partial ({D_s} + {w_s}{\lambda _{os}} + {\eta _s}(\sum\limits_{n \in \mathds{N} \cup \{ o\} } {{\lambda _{ns}}}  - 1))}}{{\partial {\lambda _{ns}}}}{\rm{ = 0}},
\end{equation}
for each $n \in \mathds{N} \cup \{ o\}$.
However, ${L_s}(\bm{\lambda _s},\bm{\eta _s})$ is not partially derivable over $\lambda_{ns}$ when $\lambda_{ns}=\frac{A_{ns}}{A_s}$ ($n\in \mathds{N}$), which is caused by $\frac{{\max \{ {\lambda _{ns}} A_s- {A_{ns}},0\} }}{{{A_s}}}$ in (\ref{equa: service response time}).
We solve this problem by dividing into two cases: $\lambda_{ns}<\frac{A_{ns}}{A_s}$ and $\lambda_{ns}\ge\frac{A_{ns}}{A_s}$, and $\lambda_{ns}({\eta _s})$ ($n\in \mathds{N}$) can be derived as
\begin{equation}\label{equa:piecewise lambda_n}
  \lambda {\;_{ns}} = \left\{ \begin{aligned}
\frac{1}{{{A_s}}}({\mu _{ns}} - \sqrt { - \frac{{{\mu _{ns}}}}{{{\eta _s}+{d_n}}}} )&{\rm{      if }}~{\eta _s} \le  - \frac{{{\mu _{ns}}}}{{{{({\mu _{ns}} - {A_{ns}})}^2}}} - {d_n}\\
\frac{1}{{{A_s}}}({\mu _{ns}} - \sqrt { - \frac{{{\mu _{ns}}}}{{{\eta _s}}}} )~~~~~&{\rm{      if }}~{\eta _s} \ge  - \frac{{{\mu _{ns}}}}{{{{({\mu _{ns}} - {A_{ns}})}^2}}}\\
\frac{{{A_{ns}}}}{{{A_s}}},~~~~~~~~~~~~~~~~~~~~~~~&{\rm{otherwise}}
\end{aligned} \right.
\end{equation}
and $\lambda_{os}$ is given as
\begin{equation}\label{equa:piecewise lambda_o}
 \lambda _{{os}} = \frac{1}{{{A_s}}}(\frac{{{B_s}}}{{{t_s}{\beta _s}}} - \sqrt { - \frac{{{B_s}}}{{{t_s}{\beta _s}({w_s} + {\eta _s})}}} ).
\end{equation}

\begin{figure}
  \centering
  \includegraphics[width=0.5\textwidth]{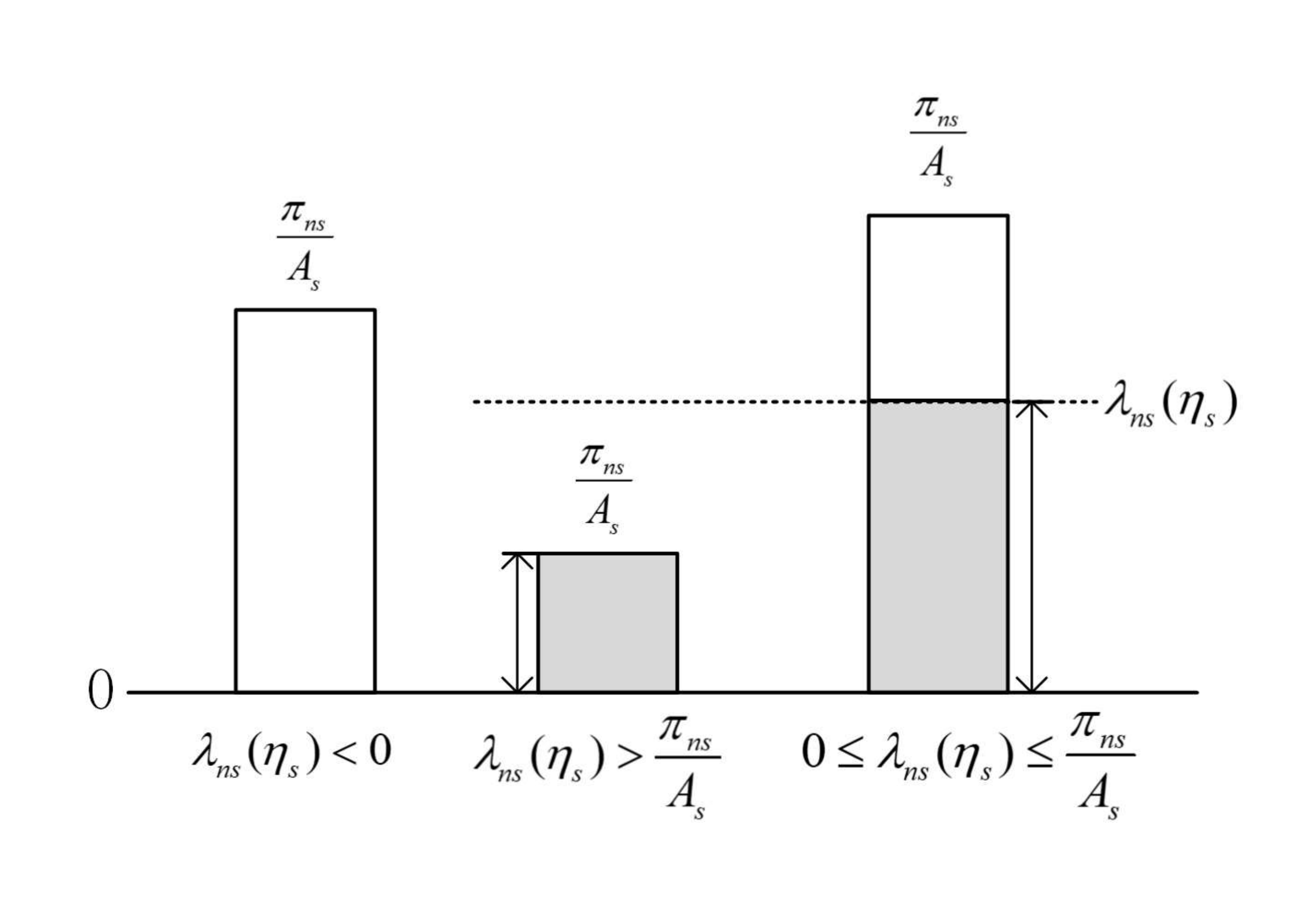}\\
  \caption{Water-filling based workload scheduling.}\label{fig:water-filling}
\end{figure}

After removing the inequation constraints, the workload scheduling policy $\bm{\Lambda}$ is given as the functions of $\eta_s$ ( Eq. (\ref{equa:piecewise lambda_n}), (\ref{equa:piecewise lambda_o})) to satisfy (C1) in the KKT conditions.
To obtain the optimal solution of $\eta_s$ which satisfies the equation constraint and the inequation constraints in \textbf{P2}, we search for the workload scheduled to $n$ ($n\in\mathds{N}\cup\{o\}$) based on the idea of water filling.
As shown in Fig. \ref{fig:water-filling}, scheduling workloads to edge nodes (or the cloud) is similar to filling water to tubes.
When the water level is above the upper bound or beneath the lower bound of the tube, the water cannot be decreased or increased anymore.
By combining Eq. (\ref{equa:piecewise lambda_n}), (\ref{equa:piecewise lambda_o}) with (C2), and we have the following conclusion:
Let $y(\eta_s)=\sum\limits_{n \in N \cup \{ o\} } {{\lambda _{ns}(\eta_s)}}  - 1$, then $y(\eta_s)$ is constant or monotone decreasing with $\eta_s$ (The proof is omitted).
Thus, we can search the optimal $\eta_s$ by the bisection method with the details summarized in Algorithm \ref{algorithm:heuristic workload scheduling algorithm}.
\begin{algorithm}
\caption{Heuristic Workload Scheduling Algorithm}
\label{algorithm:heuristic workload scheduling algorithm}
\begin{algorithmic}[1]
\STATE Define $y(\eta_s)$ as follows.
\FOR{each $n\in\mathds{N}\cup \{o\}$}
\STATE Compute $\lambda_{ns}(\eta_s)$ according to Eq. (\ref{equa:piecewise lambda_n}) and Eq. (\ref{equa:piecewise lambda_o}).
\IF{$\lambda_{ns}(\eta_s)<0$}
\STATE $\lambda_{ns}=0$.
\ELSE
\IF{$\lambda_{ns}(\eta_s)>\frac{\pi_{ns}}{A_s}$}
\STATE $\lambda_{ns}=\frac{\pi_{ns}}{A_s}$.
\ELSE
\STATE $\lambda_{ns}=\lambda_{ns}(\eta_s)$.
\ENDIF
\ENDIF
\ENDFOR
\STATE $y(\eta_s)=\sum\limits_{n \in N \cup \{ o\} } {{\lambda _{ns}}}  - 1$.
\STATE Find $\eta_s^{\rm{l}}<\eta_s^{\rm{r}}<0$ satisfying $y(\eta_s^{\rm{l}})>0,~\eta_s^{\rm{r}}<0$.\\
\WHILE{$\eta_s^{\rm{r}}-\eta_s^{\rm{l}}\ge \xi$}
\STATE $\eta_s^{\rm{m}}=\frac{\eta_s^{\rm{l}}+\eta_s^{\rm{r}}}{2}$.
\IF{$y(\eta_s^{\rm{l}})\cdot y(\eta_s^{\rm{m}})<0$}
\STATE $\eta_s^{\rm{r}}=\eta_s^{\rm{m}}$.
\ELSE
\STATE $\eta_s^{\rm{l}}=\eta_s^{\rm{m}}$.
\ENDIF
\ENDWHILE
\STATE The optimal result of $\eta_s$ is $\eta_s^{*}=\frac{\eta_s^{\rm{l}}+\eta_s^{\rm{r}}}{2}$.
\STATE Repeat step 2-13 to compute the optimal workload scheduling policy $\bm{\Lambda^{*}}$.
\end{algorithmic}
\end{algorithm}
In Algorithm \ref{algorithm:heuristic workload scheduling algorithm}, $\lambda_{ns}$ ($n\in\mathds{N}\cup \{o\}$) is traversed to compute $y(\eta_s)$ (step 2-13).
From step 16 to 23, the optimal $\eta_s$ is computed in an iterative manner, with overall $O(\log(\frac{\eta_s^r-\eta_s^l}{\xi}))$ iterations.
Therefore, the computation complexity of Algorithm \ref{algorithm:heuristic workload scheduling algorithm} is $O(N\log(\frac{\eta_s^r-\eta_s^l}{\xi}))$


\section{Simulation Results}
In this section, extensive simulations are conducted to evaluate our algorithms.
We simulate a 100m $\times$ 100m area covered with 12 edge nodes which serves a total of 8 services.
The edge nodes are empowered by heterogeneous storage and computation capacities, both of which follow uniform distribution.
The total arrival rates of computation tasks at different edge nodes $A_n$ ($n\in\mathds{N}$) are uniformly distributed.
At each edge node, the popularity of services follow Zipf's distribution, i.e., $\chi_{ns}\propto r_s^{-\upsilon}$, where $r_s$ is the rank of service $s$ and $\upsilon$ is the skewness parameter \cite{farhadi2019service}.
Thus, the arrival rate of computation tasks of service $s$ at edge node $n$ can be computed as
$A_{ns}=\chi_{ns}\cdot A_n$, where $A_n$ is the total arrival rate of computation tasks at edge node $n$.
The main parameters are listed in Table \ref{table:parameters}.
\begin{table}
\caption{Simulation Parameters}
\label{table:parameters}
\renewcommand\arraystretch{1.2}
\centering
\begin{tabular}{l|c}
\hline
\textbf{Parameter}&\textbf{Value}\\
\hline
Service storage requirement, $p_s$ &[20, 80] GB\\
Service computation requirement, $\beta_s$ & [0.1, 0.5] Giga CPU cycles/task\\
Edge node storage capacity, $P_n$ & [100, 200] GB\\
Edge node computation capacity, $R_n$ &[50, 100] Giga CPU cycles\\
Data transmission ratio of service, $t_s$ &[0.1,1.0] Mb/GHz\\
Core network bandwidth for service, $B_s$ & 160 Mbps\\
Skewness parameter, $\upsilon$ & 0.5\\
Smooth parameter, $\omega$ & $10^{-6}$\\
\hline
\end{tabular}
\end{table}

We compare the performance of ICE with two benchmark algorithms.\\
\textbf{Non-cooperation algorithm} \cite{chen2018computation}: Edge nodes cache services according to Gibbs Sampling.
At each edge node, the computation workloads of a service are either processed locally or outsourced to the cloud.\\
\textbf{Greedy algorithm}: Edge nodes cache services according to popularity.
Popular services have higher priority to be cached at edge nodes.
For the cached services, each edge node optimizes the workloads processed locally and outsourced to the cloud to minimize the edge process delay and the outsourcing traffic.
\subsection{Performance Comparison}
We compare the three algorithms in terms of the objective value, total service response time and outsourcing traffic by varying the average arrival rate of tasks at edge nodes (i.e., average $A_n$), and the results are shown in Fig. \ref{fig: performance comparison}.

\begin{figure}
\centering
\subfigure[Objective value]
{ \label{fig:performance_y} 
\includegraphics[width=0.60\textwidth]{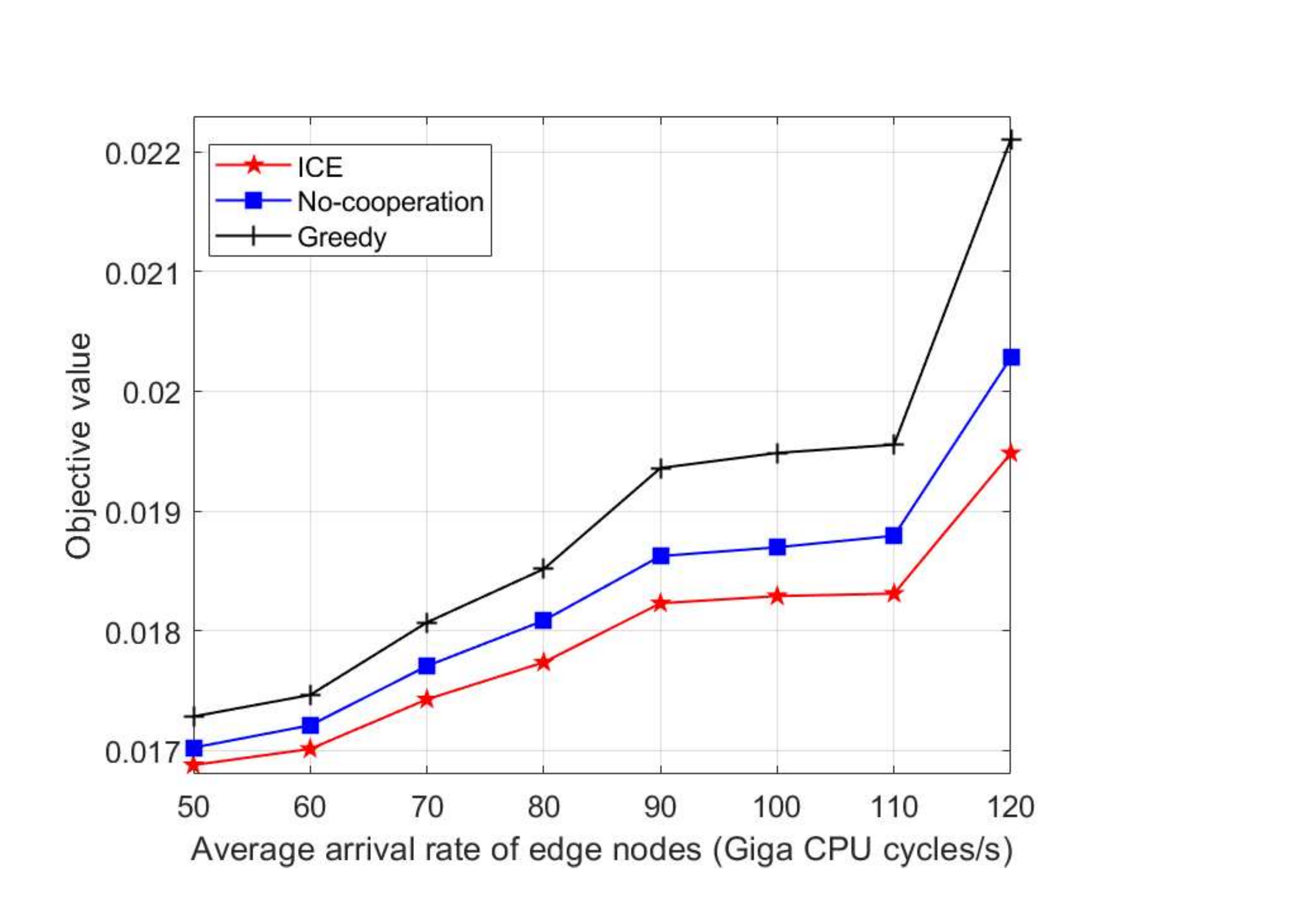}}
\hspace{0.01in}
\subfigure[Average response time]
{ \label{fig:performance_delay} 
\includegraphics[width=0.45\textwidth]{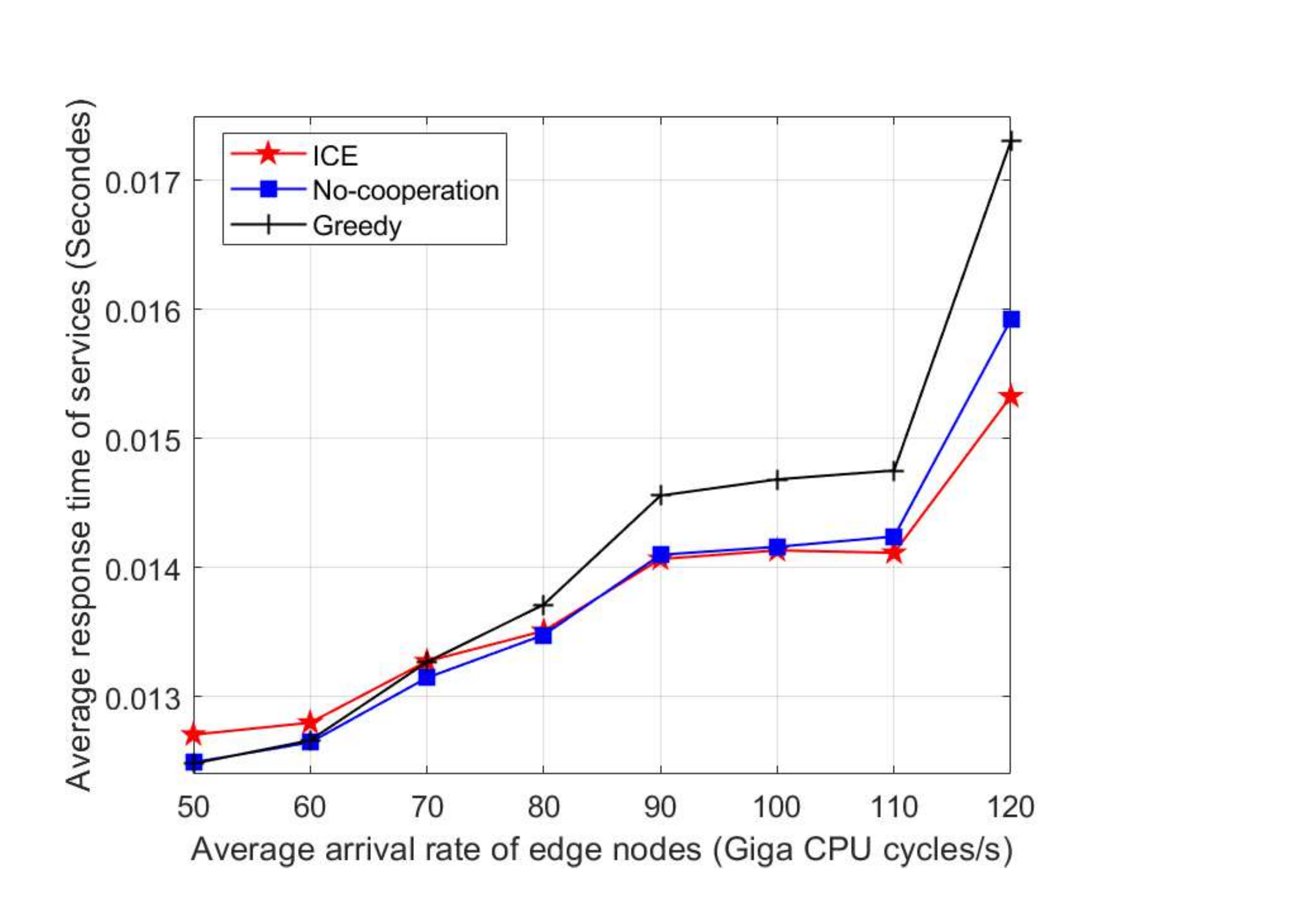}}
\subfigure[Average outsourcing traffic]
{ \label{fig:performance_out} 
\includegraphics[width=0.45\textwidth]{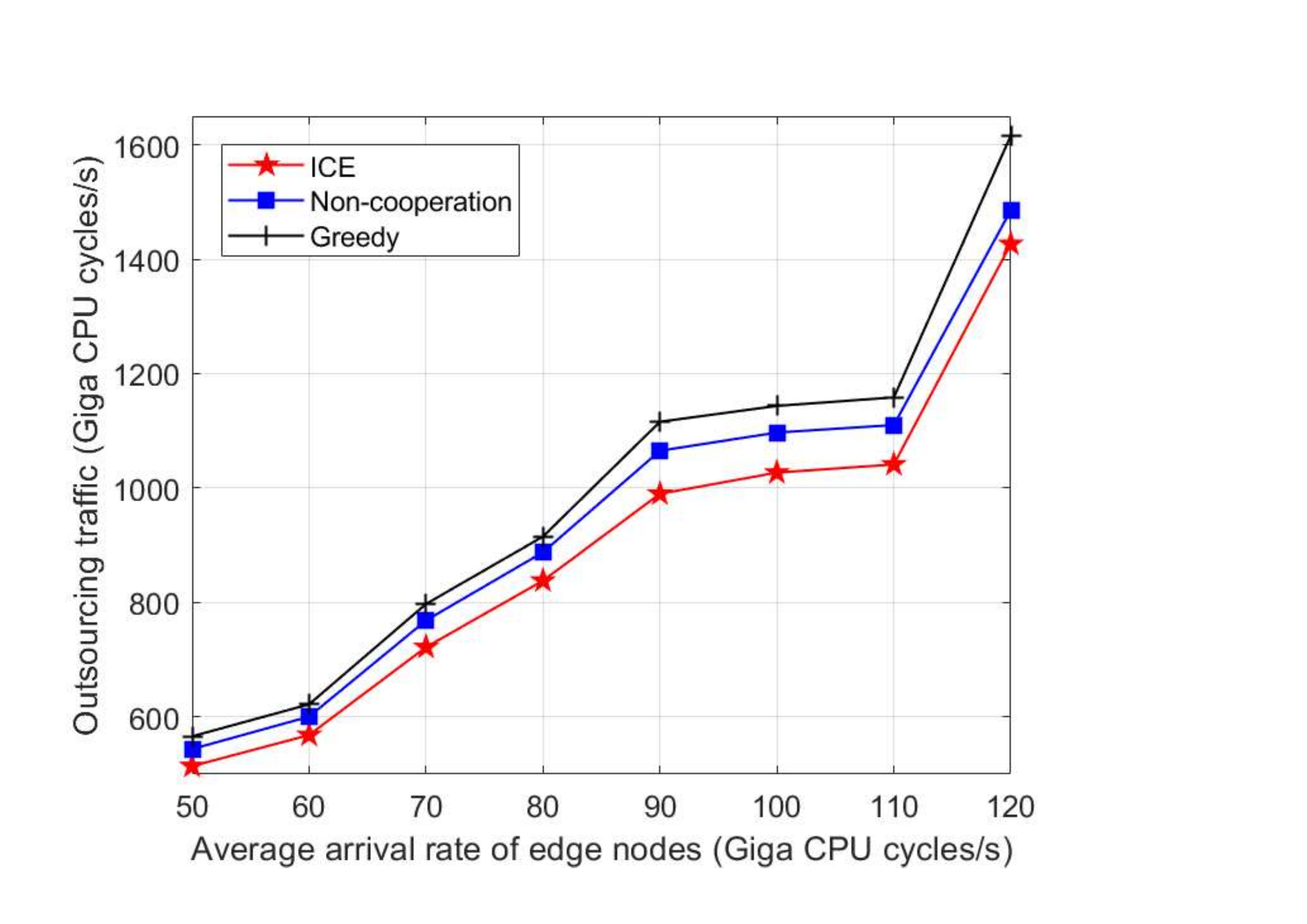}}
\caption{Performance comparison of different algorithms: the smooth parameter $\omega=10^{-6}$ and the weight factor $w_s=6\cdot10^{-4}$.} \label{fig: performance comparison} 
\end{figure}
Compared with the Non-cooperation algorithm and the Greedy algorithm, our Cooperation algorithm always yields the minimum object value and outsourcing traffic, and close to minimum total service response time .
In the Greedy algorithm, all the edge nodes cache the popular services with high priority, thus the computation tasks of less popular services have to be outsourced to the cloud.
Moreover, the Greedy algorithm only relies on service popularity to determine the edge caching policy without considering storage requirements of services.
Caching multiple less popular services with low storage requirements at edge nodes can be more beneficial to fully utilize both the computation and the storage capacities compared with caching one popular service with large storage requirement.
The Cooperation and Non-cooperation algorithms cache services based on Gibbs sampling, taking both the storage requirements of services and service popularity into consideration.
Therefore, the Greedy algorithm generally induces more outsourcing traffic and service response time than the other two algorithms.
The Non-cooperation algorithm cannot fully utilize the computation capacities of edge nodes which has low storage capacity due to the absence of cooperation among edge nodes.
In the Cooperation algorithm, both the storage and computation capacities of edge nodes can be coordinated and fully utilized through careful design of service caching and workload scheduling among the connected edge nodes.
\subsection{Convergence of ICE}
According to the theoretical analysis in Theorem \ref{theorem:convengence property}, the Gibbs sampling based service caching algorithm (Algorithm 1) can converge to the optimal service caching results with probability 1 when the smooth  parameter $\omega$ is close to 0.
This part illustrates the influence of $\omega$ on the convergence of ICE with the results shown in Fig. \ref{fig:convergence}.

As shown in Fig. \ref{fig:convergence}, the objective value can converge to the near-optimal results when $\omega\le10^{-4}$, and the converging rate is faster as $\omega$ decreases.
When $\omega \ge 10^{-3}$, the objective value converges slowly to higher value ($\omega=10^{-3}$) or even cannot converge ($\omega=10^{-2}$).
These results can be explained by Step 7 of ICE and Eq.(\ref{equa:rewritten}).
According to ICE, the smaller is $\omega$, the more probable that  the selected edge node updates to the better caching decision in each iteration.
Thus, when $\omega$ is small, the objective value converge quickly (within less iterations).
In addition, it can be concluded from Eq.(\ref{equa:rewritten}) that stationary probability of the optimal caching result increases with $\omega$, and the probability $\rightarrow 1$ when $\omega\rightarrow 0$.
Therefore, the smaller is $\omega$, the more probable that ICE converges to the optimal caching result.
\begin{figure}
  \centering
  \includegraphics[width=0.55\textwidth]{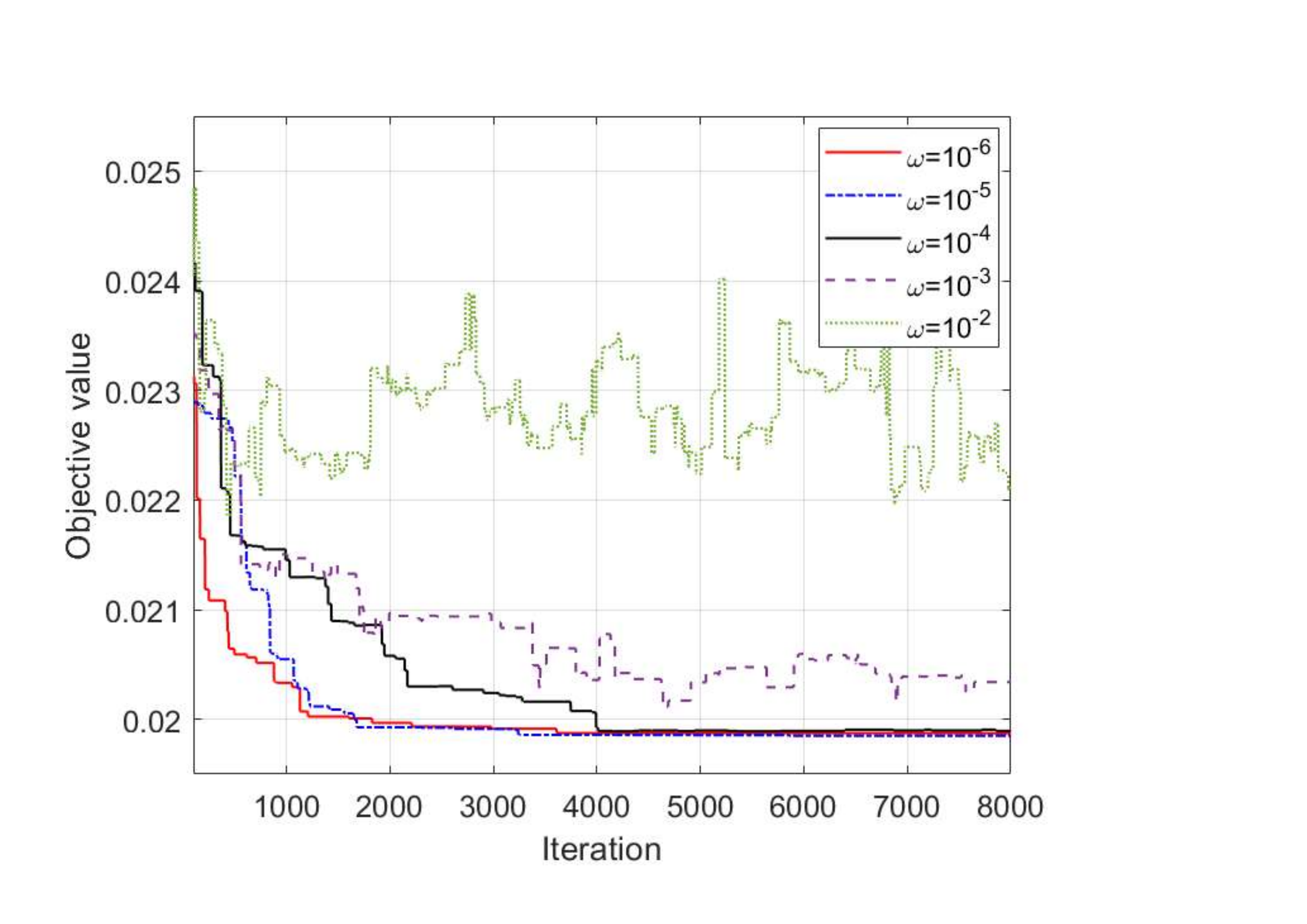}\\
  \caption{Impact of $\omega$ on the convergence of ICE: the weight factor $w_s=6\cdot 10^{-4}$.}\label{fig:convergence}
\end{figure}
\subsection{The Impact of Edge Node Connectivity}
\begin{figure}
  \centering
  \includegraphics[width=0.50\textwidth]{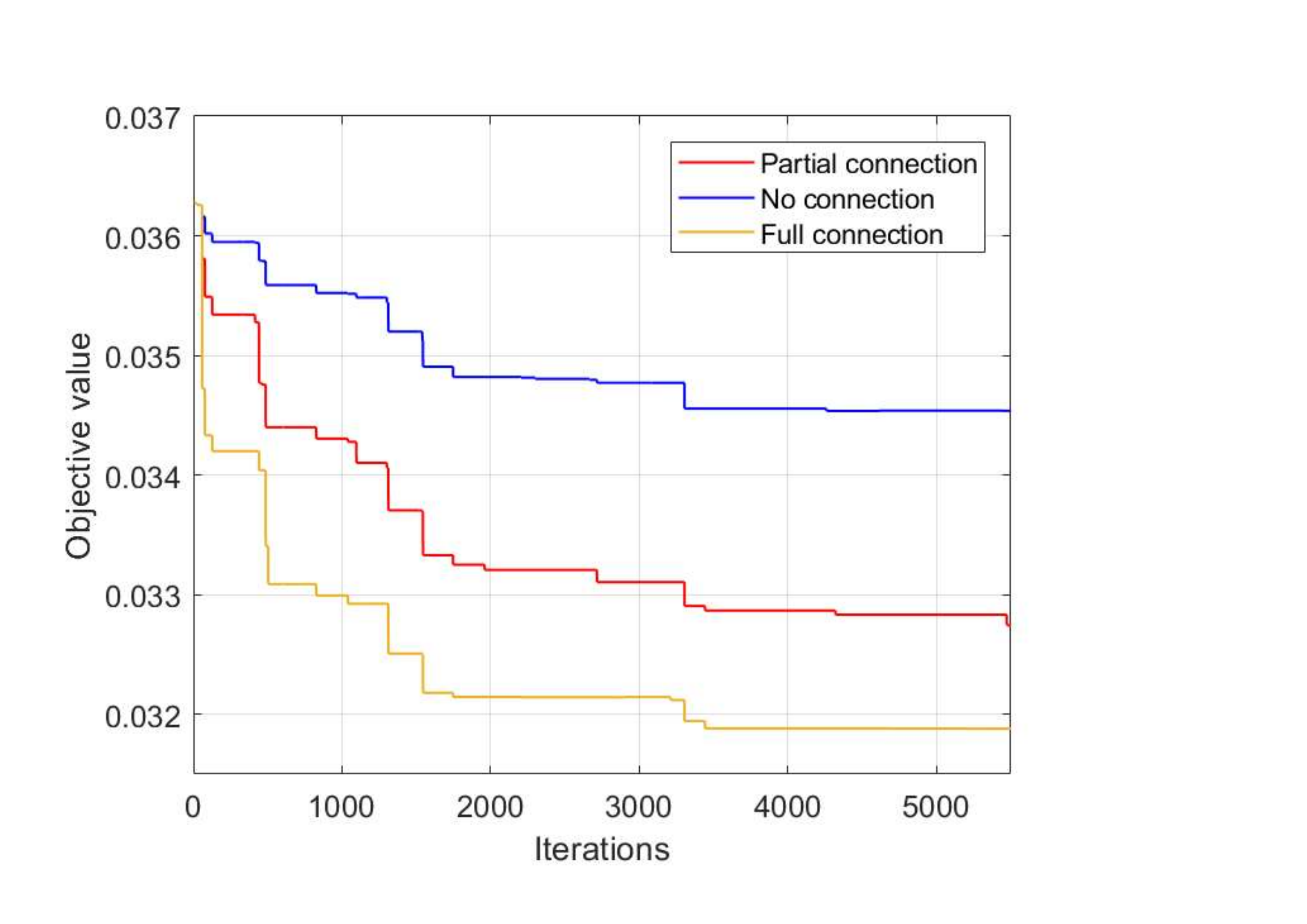}\\
  \caption{Impact of edge node connectivity: the smooth factor $\omega=10^{-6}$ and the weight factor $w_s=3\cdot 10^{-3}$.}\label{fig:impact of connection}
\end{figure}
This part analyzes the impact of edge node connectivity on the performance of ICE.
As shown in Fig. \ref{fig:impact of connection}, the system with all the edge nodes connected converges to the minimum objective value while the system with no edge nodes connected has the highest objective value.
In the system with all the edge nodes connected, the benefits of cooperation can be achieve at system level through scheduling workloads among all the edge nodes.
When edge nodes are partially connected, the cooperation benefits can only be explored within clusters (the edge nodes within a cluster are connected and different clusters are not connected with each other).
Therefore, the higher extent that edge nodes are connected with each other, the more cooperation benefits can be achieved by ICE.
\section{Conclusions}
In this paper, we have investigated cooperative service caching and workload scheduling in mobile edge computing.
Based on queuing analysis, we have formulated this problem as a mixed integer nonlinear programming problem, which is proved to have non-polynomial computation complexity.
To deal with the challenges of subproblem coupling, computation-communication tradeoff and edge node heterogeneity, we have proposed ICE based on Gibbs sampling to achieve the near-optimal service caching policy in an iterative manner.
We have further presented a water-filling based workload scheduling algorithm, which has polynomial computation complexity.
Extensive simulations have been conducted to evaluate the effectiveness and convergence of the proposed algorithm, and the impact of edge connectivity is further analyzed.
\begin{appendices}
\section{Proof of Theorem \ref{theorem:convengence property}}
\label{proof:proof of convergence property}
Let ${\bm{A}} = \{ \bm{a}_1,\bm{a}_1,...,\bm{a}_M\} $ be the caching decision space of edge nodes, and in each iteration, a random edge node $n$ randomly chooses a caching decision from ${\bm A}$.
 With Algorithm 1 iterating over the edge nodes and the caching decision space, the \emph{edge caching policy} $\bm{C}$ evolves as a $N$-dimension Markov chain, in which each dimension represents the caching decision of each edge node.
For the convenience of presentation, we analyze the scenario with 2 edge nodes, and the 2-dimension Markov chain is denoted as $\left\langle {\bm{c_1},\bm{c_2}} \right\rangle $.
In each iteration, one randomly selected edge node $n$ ($n\in\mathds{N}$) virtually changes its  current caching decision to a random caching decision from $\bm{c_n}$, thus there is
\begin{equation}\label{equ: transition probability}
  \begin{array}{l}
\Pr (\left\langle {c_1^*,{c_2}} \right\rangle |\left\langle {{c_1},{c_2}} \right\rangle ) = \frac{{{e^{ - \frac{{y(\left\langle {c_1^*,{c_2}} \right\rangle )}}{\omega }}}}}{{NM({e^{ - \frac{{y(\left\langle {c_1^*,{c_2}} \right\rangle )}}{\omega }}} + {e^{ - \frac{{y(\left\langle {{c_1},{c_2}} \right\rangle )}}{\omega }}})}}\\
\Pr (\left\langle {{c_1},c_2^*} \right\rangle |\left\langle {{c_1},{c_2}} \right\rangle ) = \frac{{{e^{ - \frac{{y(\left\langle {{c_1},c_2^*} \right\rangle )}}{\omega }}}}}{{NM({e^{ - \frac{{y(\left\langle {{c_1},c_2^*} \right\rangle )}}{\omega }}} + {e^{ - \frac{{y(\left\langle {{c_1},{c_2}} \right\rangle )}}{\omega }}})}},
\end{array}
\end{equation}
where $y(\left\langle {{c_1},{c_2}} \right\rangle )$ is the objective value when the caching policy is $\left\langle {{c_1},{c_2}} \right\rangle $.
In this scenario, $N=2$.
Denote by $\pi (\left\langle {{c_1},{c_2}} \right\rangle )$ the stationary probability distribution of caching policy $\left\langle {{c_1},{c_2}} \right\rangle $, then $\pi (\left\langle {{c_1},{c_2}} \right\rangle )$ can be derived by the fine stationary condition of the Markov chain as
\begin{equation}\label{equa: stationary condition}
\begin{array}{l}
  \pi(\left\langle {{a_1},{a_1}} \right\rangle )\Pr (\left\langle {{a_1},{a_m}} \right\rangle |\left\langle {{a_1},{a_1}} \right\rangle )\\
  =\pi(\left\langle {{a_1},{a_m}} \right\rangle )\Pr (\left\langle {{a_1},{a_1}} \right\rangle |\left\langle {{a_1},{a_m}} \right\rangle ).
  \end{array}
\end{equation}
Substitute (\ref{equ: transition probability}) into (\ref{equa: stationary condition}), it can be derived that
\begin{equation}\label{equa:symmetry}
\begin{array}{l}
  \pi (\left\langle {{a_1},{a_1}} \right\rangle ) \times \frac{{{e^{ - \frac{{y(\left\langle {{a_1},{a_m}} \right\rangle )}}{\omega }}}}}{{NM({e^{ - \frac{{y(\left\langle {{a_1},{a_m}} \right\rangle )}}{\omega }}} + {e^{ - \frac{{y(\left\langle {{a_1},{a_1}} \right\rangle )}}{\omega }}})}}\\=
  \pi (\left\langle {{a_1},{a_m}} \right\rangle ) \times \frac{{{e^{ - \frac{{y(\left\langle {{a_1},{a_1}} \right\rangle )}}{\omega }}}}}{{NM({e^{ - \frac{{y(\left\langle {{a_1},{a_m}} \right\rangle )}}{\omega }}} + {e^{ - \frac{{y(\left\langle {{a_1},{a_1}} \right\rangle )}}{\omega }}})}}.
  \end{array}
\end{equation}
It can be observed that Eq. (\ref{equa:symmetry}) is symmetric and can be balanced if $\pi (\left\langle {{c_1},{c_2}} \right\rangle )$ has the form of $\pi (\left\langle {{c_1},{c_1}} \right\rangle )=\gamma {{e^{ - \frac{{y(\left\langle {{c_1},{c_2}} \right\rangle )}}{\omega }}}}$, where $\gamma$ is a constant.
Let $\Phi$ be the caching policy space.
To ensure that $\sum\limits_{\left\langle {{c_1},{c_2}} \right\rangle  \in \Phi } {\pi (\left\langle {{c_1},{c_2}} \right\rangle )}  = 1$, the stationary probability distribution ${\pi (\left\langle {{c_1},{c_2}} \right\rangle )}$ should be given as
\begin{equation}\label{equa:stationary distribution}
\pi (\left\langle {{c_1},{c_2}} \right\rangle ) = \frac{{{e^{ - \frac{{y(\left\langle {{c_1},{c_2}} \right\rangle )}}{\omega }}}}}{{\sum\limits_{\left\langle {c_1^f,c_2^f} \right\rangle  \in \Phi } {{e^{ - \frac{{y(\left\langle {c_1^f,c_2^f} \right\rangle )}}{\omega }}}} }}
\end{equation}
Eq. (\ref{equa:stationary distribution}) can be rewritten as
\begin{equation}\label{equa:rewritten}
  \pi (\left\langle {{c_1},{c_2}} \right\rangle ) = \frac{1}{{\sum\limits_{\left\langle {c_1^f,c_2^f} \right\rangle  \in \Phi } {{e^{\frac{{y(\left\langle {{c_1},{c_2}} \right\rangle ) - y(\left\langle {c_1^f,c_2^f} \right\rangle )}}{\omega }}}} }}.
\end{equation}
Let $\left\langle {{c_1^*},{c_2^*}} \right\rangle$ be the globally optimal solution that minimizes the objective value, i.e., $y(\left\langle {{c_1^*},{c_2^*}} \right\rangle)\le y(\left\langle {{c_1^f},{c_2^f}} \right\rangle)$ for any $\left\langle {{c_1^f},{c_2^f}} \right\rangle \in\Phi$.
It can be concluded that $\pi (\left\langle {{c_1^*},{c_2^*}} \right\rangle )$ increases as $\omega$ decreases, and $\pi (\left\langle {{c_1^*},{c_2^*}} \right\rangle )\rightarrow1$ when $\omega\rightarrow0$.
\section{Proof of Theorem \ref{theorem:convexity property}}
\label{proof:proof of convexity}
An optimization problem should satisfy that the objective function and the inequation constraint functions are convex, and the equation constraint function is affine over the decision variables.
It is easy to identify that the inequation and equation constraint functions satisfy these conditions.
We just need to prove the convexity of the objective function.

In Eq. (\ref{equa: objective function}), it is intuitive that $\sum\limits_{s \in S} {({\lambda _{os}}{d_{{\rm{cloud}}}} + {w_s}{\lambda _{os}}} )$ and $\sum\limits_{s \in S} {\sum\limits_{n \in N} {\frac{{\max \{ {\lambda _{ns}}{A_s} - {A_{ns}},0\} }}{{{A_s}}}{d_n}}}$ are convex over $\bm{\Lambda}$.
Let $x(\bm{\Lambda})= \sum\limits_{s \in S} {\sum\limits_{n \in N} {\frac{{{\lambda _{ns}}}}{{{\mu _{ns}} - {\lambda _{ns}}{A_s}}}}}$.
Denote by $H=[h_{mn}]_{m\times n}$ the Hessian matrix of $x(\bm{\Lambda})$, and $h_{mn}$ ($m\in \mathds{N}$, $n\in\mathds{N}$) can be given as
\begin{equation}\label{equa:Hessian marix}
  {h_{mn}} = \left\{ \begin{aligned}
\frac{{2{\mu _{ns}}{A_s}}}{{{{({\mu _{ns}} - {A_s}{\lambda _{ns}})}^3}}},~~~~~&\rm{if}~\emph{m=n}\\
0.~~~~~~~~~~~~~&{\rm{otherwise}}
\end{aligned} \right.
\end{equation}
In Eq. (\ref{equa:Hessian marix}), $h_{mn}>0$ if $m=n$, and otherwise, $h_{mn}=0$.
Therefore, $H$ is a positive definite matrix, and $x(\bm{\Lambda})$ is convex over $\bm{\Lambda}$ \cite{boyd2004convex}.
The objective function $f(\bm{\Lambda})$ is the sum of several convex functions over $\bm{\Lambda}$, so $f(\bm{\Lambda})$ is also convex over $\bm{\Lambda}$.
Thus, we can conclude that problem \textbf{P2} is a convex optimization problem over the workload scheduling policy $\bm{\Lambda}$.
\end{appendices}

%

\bibliographystyle{IEEEtran}

\end{document}